
\RequirePackage{fix-cm}
\documentclass[smallextended]{svjour3}       
\smartqed  

\usepackage{amssymb,amsmath,amsfonts,mathrsfs}
\usepackage[utf8]{inputenc}
\usepackage[english]{babel}
\usepackage{graphicx}
\usepackage{color}
\usepackage{doi}
\usepackage{caption}
\usepackage{commath}

\DeclareMathOperator{\RR}{\mathbb{R}}

\DeclareMathOperator{\ZZ}{\mathbb{Z}}

\DeclareMathOperator{\BC}{\mathcal{B}}

\DeclareMathOperator{\VC}{\mathcal{V}}

\DeclareMathOperator{\PC}{\mathcal{P}}

\DeclareMathOperator{\SC}{\mathcal{S}}
\DeclareMathOperator{\ZC}{\mathcal{Z}}

\DeclareMathOperator{\MC}{\mathcal{M}}

\DeclareMathOperator{\FC}{\mathcal{F}}

\DeclareMathOperator{\UC}{\mathcal{U}}

\DeclareMathOperator{\JC}{\mathcal{J}}
\DeclareMathOperator{\IC}{\mathcal{I}}

\DeclareMathOperator{\BS}{\mathscr{B}}

\DeclareMathOperator{\NotBC}{\overline{\BC}}

\DeclareMathOperator{\rank}{rank}

\DeclareMathOperator{\conv}{conv\!.\!hull}

\DeclareMathOperator{\vertex}{vert}

\DeclareMathOperator{\BUnit}{\mathbf 1}
\DeclareMathOperator{\BZero}{\mathbf 0}

\DeclareMathOperator{\supp}{supp}
\DeclareMathOperator{\zeros}{zeros}

\newcommand*{\intint}[2][1]{\{#1, \dots, #2\}}

\DeclareMathOperator{\Deq}{\overset{\Delta}{=}}

\title{On a Simple Connection Between $\Delta$-modular ILP and LP, and a New Bound on the Number of Integer Vertices}

\begin{document}

\author{D.~V.~Gribanov, D.~S.~Malyshev, I.~A.~Shumilov}
%
%
\institute{D.~V.~Gribanov \at National Research University Higher School of Economics, 25/12 Bolshaja Pecherskaja Ulitsa, Nizhny Novgorod, 603155, Russian Federation\\
\email{dimitry.gribanov@gmail.com}
\and
D.~S.~Malyshev \at National Research University Higher School of Economics, 25/12 Bolshaja Pecherskaja Ulitsa, Nizhny Novgorod, 603155, Russian Federation;
\at Huawei, Intelligent systems and Data science Technology center (2012 Laboratories), 7/9 Smolenskaya Square, Moscow, 121099, Russian Federation\\
\email{dsmalyshev@rambler.ru}
\and
I.~A.~Shumilov \at Lobachevsky State University of Nizhny Novgorod, 23 Gagarina Avenue, Nizhny Novgorod, 603950, Russian Federation\\
\email{ivan.a.shumilov@gmail.com}
}

\maketitle

\begin{abstract}
    Let $A \in \ZZ^{m \times n}$, $\rank(A) = n$, $b \in \ZZ^m$, and $\PC$ be an $n$-dimensional polyhedron, induced by the system $A x \leq b$.
    
    It is a known fact that if $\FC$ is a $k$-face of $\PC$, then there exist at least $n-k$ linearly independent inequalities of the system $A x \leq b$ that become equalities on $\FC$. In other words, there exists a set of indices $\JC$, such that $|\JC| \geq n-k$, $\rank(A_{\JC}) = n-k$, and
    $$
        A_{\JC} x - b_{\JC} = \BZero,\quad \text{for any $x \in \FC$}.
    $$
    
    We show that a similar fact holds for the integer polyhedron 
    $$
    \PC_{I} = \conv\bigl(\PC \cap \ZZ^n\bigr),
    $$
    if we additionally suppose that $\PC$ is $\Delta$-modular, for some $\Delta \in \{1,2,\dots\}$. More precisely, if $\FC$ is a $k$-face of $\PC_{I}$, then there exists a set of indices $\JC$, such that $|\JC| \geq n-k$, $\rank(A_{\JC}) = n-k$, and 
    $$
        A_{\JC} x - b_{\JC} \Deq \BZero,\quad \text{for any $x \in \FC \cap \ZZ^n$},
    $$ where $x \Deq y$ means that $\|x - y\|_{\infty} < \Delta$. In other words, there exist at least $n-k$ linearly independent inequalities of the system $A x \leq b$ that almost become equalities on $\FC \cap \ZZ^n$. When we say almost, we mean that the slacks are not greater than $\Delta-1$. Using this fact, we  prove the inequality 
    $$
    |\vertex(\PC_I)| \leq 2 \cdot \binom{m}{n} \cdot \Delta^{n-1},
    $$ for the number of vertices of $\PC_I$, which is better, than the state of the art bound for $\Delta = O(n^2)$.
    \keywords{Linear Programming \and Integer Linear Programming \and Number of Vertices \and Delta-modular}
\end{abstract}

\section{Basic Definitions and Notations}

The maximum absolute value of entries of a matrix $A$ (also known as \emph{the matrix $\max$-norm}) is denoted by $\|A\|_{\max} = \max_{i,j} \abs{A_{i\,j}}$. The number of non-zero components of a vector $x$ is denoted by $\norm{x}_0=\abs{\{i\colon x_i \not= 0\}}$. 

Let $v \in \RR^n$. By $\supp_{\Delta}(v)$ and $\zeros_{\Delta}(v)$ we denote $\{i \colon |v_i| \geq \Delta \}$ and $\intint n \setminus \supp_{\Delta}(v)$, respectively. Denote $\supp(v) := \supp_0(v)$ and $\zeros(v) := \zeros_0(v)$. Clearly, $\norm{v}_0 = \abs{\supp(v)}$.

\begin{definition}
For a matrix $A \in \ZZ^{m \times n}$, by $$
\Delta_k(A) = \max\left\{\abs{\det (A_{\IC \JC})} \colon \IC \subseteq \intint m,\; \JC \subseteq \intint n,\; \abs{\IC} = \abs{\JC} = k\right\},
$$ we denote the maximum absolute value of determinants of all the $k \times k$ sub-matrices of $A$. Here, the symbol $A_{\IC \JC}$ denotes the sub-matrix of $A$, which is generated by all the rows with indices in $\IC$ and all the columns with indices in $\JC$. The matrix $A$ with $\Delta(A) \leq \Delta$, for some $\Delta > 0$, is called \emph{$\Delta$-modular}. Note that $\Delta_1(A) = \|A\|_{\max}$.

\end{definition}

For a matrix $B \in \RR^{m \times n}$,
$\conv(B) = \{B t\colon t \in \RR_+^{n},\, \sum_{i=1}^{n} t_i = 1  \}$ is the \emph{convex hull spanned by columns of} $B$.


\section{A simple connection between $\Delta$-modular ILP and LP}

Let $A \in \ZZ^{m \times n}$, $\rank(A) = n$, $b \in \ZZ^m$, and $\PC$ be the $n$-dimensional polyhedron, defined by the system $A x \leq b$.

Let $\FC$ be a $k$-dimensional face of $\PC$. It is a known fact from the theory of linear inequalities that there exist $n-k$ linearly independent inequalities of $A x \leq b$ that become equalities on $\FC$. More precisely, there exists a set of indices $\JC \subseteq \intint m$, such that $|\JC| \geq n - k$, $\rank(A_{\JC}) = n - k$, and 
$$
A_{\JC} x - b_{\JC} = 0,\quad \text{for $x \in \FC$},
$$ and, consequently,
$$
\abs{\supp(A x - b)} \leq m - n + k,\quad \text{for $x \in \FC$}.
$$

We are going to prove a similar fact for the polyhedron $\PC_I = \conv\bigl(\PC \cap \ZZ^n\bigr)$.

\begin{theorem}\label{deep_base_th}
Let $\FC$ be a $k$-dimensional face of $\PC_I$ and $\Delta = \Delta(A)$. Then, there exists a set of indices $\JC \subseteq \intint m$, such that $|\JC| \geq n-k$, $\rank(A_{\JC}) = n-k$, and
$$
A_{\JC} x - b_{\JC} \Deq \BZero,\quad \text{for any $x \in \FC \cap \ZZ^n$},
$$ and, consequently,
$$
\abs{\supp_{\Delta}(A x - b)} \leq m - n + k,\quad \text{for any $x \in \FC \cap \ZZ^n$}.
$$
\end{theorem}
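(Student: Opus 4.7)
The plan is to split the argument into the vertex case ($k = 0$) and a reduction of the general case to it. A useful preliminary remark is that the slack $s_j(x) := b_j - A_j x$ is affine and nonnegative on $\FC \subseteq \PC$, so $\max_{x \in \FC} s_j(x)$ is attained at a vertex of $\FC$. Consequently, the set
\[
T^* := \{j \in \intint m : s_j(x) < \Delta \text{ for all } x \in \FC \cap \ZZ^n \}
\]
equals $\bigcap_{v \in \vertex(\FC)} T(v)$, where $T(v) := \{j : s_j(v) < \Delta\}$. Thus the theorem reduces to showing that $\rank(A_{T^*}) \geq n - k$; any rank-$(n-k)$ subset of $T^*$ then gives the desired $\JC$.

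For the vertex case I proceed by contradiction. Suppose $v$ is a vertex of $\PC_I$ and $r := \rank(A_{T(v)}) < n$. Choose $T_0 \subseteq T(v)$ with $|T_0| = r$ and $\rank(A_{T_0}) = r$, and $S \subseteq \intint n$ with $|S| = r+1$ and $\rank(A_{T_0, S}) = r$. I define $d \in \ZZ^n$ supported on $S$ by Cramer's rule for the one-dimensional kernel of $A_{T_0, S}$: the $i$-th entry of $d$ (for $i \in S$) equals $\pm \det(A_{T_0, S \setminus \{i\}})$, with signs adjusted so that $A_{T_0} d = \BZero$. Cofactor expansion gives $A_j d = \pm \det(A_{T_0 \cup \{j\}, S})$ for every row $A_j$ of $A$, so $|A_j d| \leq \Delta_{r+1}(A) \leq \Delta$. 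Moreover, for $j \in T(v)$ the row $A_j$ lies in the row span of $A_{T_0}$, forcing $A_j d = 0$; for $j \notin T(v)$, one has $s_j(v) \geq \Delta \geq |A_j d|$. Either way, $A_j(v \pm d) \leq b_j$, so $v \pm d \in \PC \cap \ZZ^n$, and $v = \tfrac{1}{2}((v+d) + (v-d))$ contradicts that $v$ is a vertex of $\PC_I$.

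For a general $k$-face $\FC$, I would apply the same Cramer-style construction after a coordinate reduction. Via a unimodular change of basis on $\ZZ^n$ (which preserves the $\Delta$-modularity of $A$), one may assume that the linear part $L$ of the affine hull of $\FC$ coincides with $\linh(e_1, \ldots, e_k)$, so that $L^\perp = \linh(e_{k+1}, \ldots, e_n)$ and $L \cap \ZZ^n = \ZZ^k \times \{\BZero\}$. In these coordinates, the bound $\rank(A_{T^*}) \geq n - k$ is equivalent to the sub-matrix $A_{T^*, \intint[k+1]{n}}$ having rank $n - k$. If the latter fails, the vertex-case Cramer construction applied to $A_{T^*, \intint[k+1]{n}}$ produces a nonzero $w' \in \ZZ^{n-k}$ with $A_{T^*, \intint[k+1]{n}} w' = \BZero$ and $|A_{j, \intint[k+1]{n}} w'| \leq \Delta$ for every $j$; the vector $w := (\BZero, w') \in \ZZ^n$ then lies in $L^\perp \cap \ker(A_{T^*})$ and satisfies $\|Aw\|_\infty \leq \Delta$. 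I would then fix a vertex $v$ of $\FC$ and try to obtain a contradiction by showing $v + w, v - w \in \PC \cap \ZZ^n$, which would contradict that $v$ is a vertex of $\PC_I$.

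The main obstacle lies in handling rows $j \in T(v) \setminus T^*$: such a row is almost-tight at $v$ but not uniformly on $\FC$, and the inequality $s_j(v) < \Delta$ may be strict enough that $|A_j w|$ exceeds $s_j(v)$, allowing $v + w$ or $v - w$ to violate constraint $j$. I expect to overcome this either by choosing $v$ to be a vertex of $\FC$ with $T(v) = T^*$---whose existence would follow from a greedy walk along the edges of $\FC$ that decreases $|T(v) \triangle T^*|$ at each step---or by strengthening the Cramer construction to adjoin additional rows from $T(v) \setminus T^*$ to the tight matrix, arguing that the resulting higher-order sub-determinants of $A$ are still controlled by $\Delta$ via $\Delta$-modularity.
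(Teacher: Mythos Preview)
Your vertex case contains a real gap: the step ``$|A_j d|\le\Delta_{r+1}(A)\le\Delta$'' fails because $\Delta_{r+1}(A)\le\Delta_n(A)$ is false in general. For instance, the matrix with rows $(1,0,0)$, $(0,2,1)$, $(0,1,1)$, $(0,1,0)$ has $\Delta_2=2$ but $\Delta_3=1$. Your Cramer vector $d$, built from $r\times r$ minors on an $(r{+}1)$-column set $S$, only yields $|A_j d|\le\Delta_{r+1}(A)$, which need not be $\le\Delta$. A clean repair is to extend $T_0$ to an $n$-row basis $T_1$ and take $d$ to be the column of the adjugate of $A_{T_1}$ indexed by some row of $T_1\setminus T_0$: then $A_{T_0}d=\BZero$ and, for every $j$, $A_j d$ is (up to sign) an $n\times n$ sub-determinant of $A$, whence $|A_j d|\le\Delta_n(A)=\Delta$.

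For the general $k$-face your argument is left incomplete, and neither proposed fix is solid. The greedy walk to a vertex $v$ of $\FC$ with $T(v)=T^*$ need not succeed: one can have indices $j_1,j_2\notin T^*$ such that no single vertex of $\FC$ has both slacks $\ge\Delta$, so $T(v)\supsetneq T^*$ for every vertex. The second fix is only a hope. The paper sidesteps this entire reduction by working directly with an arbitrary $v\in\FC\cap\ZZ^n$ and producing $n-r>k$ independent directions at once rather than a single $d$. It right-multiplies by a unimodular $Q$ to put $A_{\ZC(v)}$ into the shape $(H\;\BZero)$; the trailing $(n{-}r)$-column block $B$ of $A_{\SC(v)}Q^{-1}$ is then $\Delta$-modular in its own rank (from the block-triangular identity $\det\bigl(\begin{smallmatrix}H'&0\\ *&B_R\end{smallmatrix}\bigr)=\det H'\cdot\det B_R$, an $n\times n$ minor of $AQ^{-1}$), and after bringing $B$ to HNF one has $\|B\|_{\max}\le\Delta$. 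Each unit move in one of the trailing coordinates then keeps all $\SC$-slacks nonnegative and leaves the $\ZC$-slacks untouched, so $v$ lies in the relative interior of an $(n{-}r)$-dimensional cross-polytope inside $\PC_I$, contradicting $\dim\FC=k<n-r$. This HNF step is exactly what ties the entry bound to $\Delta_n(A)$---the ingredient your Cramer construction was missing---and it handles all face dimensions in one stroke, without the $T^*$ detour.
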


\begin{proof}
Let consider a point $v \in \ZZ^n$, lying on a $k$-dimensional face $\FC$ of $\PC_I$, and the corresponding slacks vector $u = b - A v$. Let $\SC = \supp_{\Delta}(u)$ and $\ZC = \zeros_{\Delta}(u)$. Suppose to the contrary that $r := \rank(A_{\ZC}) < n-k$. We have

$$
\begin{pmatrix}
A_{\ZC} \\
A_{\SC}
\end{pmatrix} v + \begin{pmatrix}
u_{\ZC} \\
u_{\SC} 
\end{pmatrix} = \begin{pmatrix}
b_{\ZC} \\
b_{\SC}
\end{pmatrix}.
$$

There exists a unimodular matrix $Q \in \ZZ^{n \times n}$, such that $A_{\ZC} = \bigl(H \, \BZero\bigr) Q$, where $\bigl(H \, \BZero\bigr)$ is the HNF of $A_{\ZC}$ and $H \in \ZZ^{|\ZC| \times r}$. The zero sub-matrix of $\bigl(H\, \BZero\bigr)$ has $n - r > k$ columns. Let $y = Q v$, then
$$
\begin{pmatrix}
H & \BZero \\
C & B
\end{pmatrix} y + \begin{pmatrix}
u_{\ZC} \\
u_{\SC}
\end{pmatrix} = \begin{pmatrix}
b_{\ZC} \\
b_{\SC}
\end{pmatrix},
$$
where $\bigl(C \, B\bigr) = A_{\SC} Q^{-1}$ and $B \in \ZZ^{|\SC| \times (n-r)}$. The matrix $B$ has a full column rank $n-r$, has at list $k$ columns, and it is $\Delta$-modular. 
Consider the last $|\SC|$ equalities of the previous system. They can be written out as follows:
$$
B z + u_{\SC} = b_{\SC} -  C y_{\intint r},
$$ where $z = y_{\intint[(r+1)]{n}}$ is composed of last $n-r$ components of $y$.

From the definition of $\SC$, it follows that $(u_{\SC})_i \geq \Delta$, for any $i \in \intint{|\SC|}$. W.l.o.g., assume that $B$ is reduced to the HNF. Hence, due to Gribanov, Malyshev et al. \cite[Lemma~1]{FPT_Grib}, $\|B\|_{\max} \leq \Delta$. Let $h_1, h_2, \dots, h_{n-r}$ be the columns of $B$. Consequently, any point of the type $z \pm e_1 \pm e_2 \pm \dots \pm e_{(n-r)}$ with its corresponding slack vector $u_S \pm h_1 \pm h_2 \pm \dots \pm h_{(n-r)}$ is feasible. Since $n-r > k$, the last fact contradicts to the fact that the original point $v$ lies on the $k$-dimensional face of $\PC_I$.
\end{proof}

\section{The number of integer vertices}

\subsection{Related works}

Let us survey some remarkable results on the value of $|\vertex(\PC_I)|$. Let $\xi(n,m)$ denote the maximum number of vertices in $n$-dimensional polyhedron with $m$ facets. Due to the seminal paper \cite{MaxFacesTh} of P.~McMullen, the value of $\xi(n,m)$ attains its maximum on the class of polytopes that are dual to cyclic polytopes with $m$ vertices.

Due to the book of B.~Gr\"unbaum \cite[Section~4.7]{Grunbaum}, we have:
\begin{equation*}\label{cyclic_poly_faces_num}
    \xi(n,m) = \begin{cases}
    \frac{m}{m-s} \binom{m-s}{s},\text{ for }n = 2s\\
    2\binom{m-s-1}{s},\text{ for }n = 2s+1\\
    \end{cases} = O\left(\frac{m}{n}\right)^{n/2}.
\end{equation*}

Due to Veselov \& Chirkov \cite{IntVertEstimates_VesChir} (see also \cite{IntVerticesSurveyPart1,IntVerticesSurveyPart2}), we have:
\begin{multline}\label{int_vert_bound_delta_ext}
    \abs{\vertex(\PC_I)} \leq (n+1)^{n+1} \cdot n! \cdot \xi(n,m) \cdot \log_2^{n-1}(2 \sqrt{n+1} \cdot \Delta_{ext}) = \\
    = m^{\frac{n}{2}} \cdot O(n)^{\frac{3}{2}n+1.5} \cdot \log^{n-1}(n \cdot \Delta_{ext}),
\end{multline}
Here $\Delta_{ext} = \Delta(\bigl(A\,b\bigr))$ is the maximal absolute value of $n \times n$ sub-determinants of the augmented matrix $\bigl(A\,b\bigr)$.

Let $\phi$ be the bit-encoding length of $A x \leq b$. Due to the book of A.~Schrijver \cite[Chapter~3.2, Theorem~3.2]{Schrijver}, we have $\Delta_{ext} \leq 2^\phi$. In notation with $\phi$, the last bound \eqref{int_vert_bound_delta_ext} becomes
$$
m^{\frac{n}{2}} \cdot O(n)^{\frac{3}{2}n+1.5} \cdot (\phi + \log n)^{n-1},
$$
which outperforms a more known bound
\begin{equation}\label{int_vert_cook}
    m \cdot \binom{m-1}{n-1} \cdot (5 n^2 \cdot \phi + 1)^{n-1} = m^n \cdot \Omega(n)^{n-1} \cdot \phi^{n-1},
\end{equation} due to Cook, Hartmann et al. \cite{IntVert_Cook}, because $m \geq n$ and \eqref{int_vert_bound_delta_ext} depends on $m$ as $m^{n/2}$.

Due to Veselov \& Chirkov \cite{IntVerticesSurveyPart2}, the previous inequality \eqref{int_vert_bound_delta_ext} could be combined with the sensitivity result of Cook, Gerards et~al. \cite{Sensitivity_Tardos} to construct a bound that depends on $\Delta$ instead of $\Delta_{ext}$:
\begin{multline}\label{int_vert_bound_delta}
\abs{\vertex(\PC_I)} \leq (n+1)^{n+1} \cdot n! \cdot \xi(n,m) \cdot \xi(n,2m) \cdot \log_2^{n-1}(2 \cdot (n+1)^{2.5} \cdot \Delta^2) = \\
= m^n \cdot O(n)^{n+1.5} \cdot \log^{n-1}(n \cdot \Delta),
\end{multline}
which again is better than the bound \eqref{int_vert_cook} due to Cook, Hartmann et al., because \eqref{int_vert_bound_delta} depends only from the bit-encoding length of $A$, while \eqref{int_vert_cook} depends on the length of both $A$ and $b$.

In our work, we will prove the bound:
\begin{equation}\label{main_bound}
\abs{\vertex(\PC_I)} \leq 2 \cdot \binom{m}{n} \cdot \Delta^{n-1},
\end{equation} which outperforms the state of the art bound \eqref{int_vert_bound_delta} for $\Delta = O(n^2)$. Additionally, using results of the papers \cite{columns_delta_modular} and  \cite{ModularDiffColumns}, due to Averkov \& Schymura and Lee, Paat \& Stallknecht, we give two bounds that are independent on $m$: 
\begin{gather*}
    \abs{\vertex(\PC_I)}  = O(n)^n \cdot \Delta^{3n-1},\\
    \abs{\vertex(\PC_I)}  = O(n)^{3n} \cdot \Delta^{2n - 1}.
\end{gather*}

Some special cases deserve an additional consideration. Assume that $\PC$ is defined by a system in the standard form
$$
\begin{cases}
    A x = b\\
    x \in \RR^n_{\geq 0},
\end{cases}
$$ where $A \in \ZZ^{k \times n}$, $b \in \ZZ^k$ and $\rank(A) = k$. It is natural to call the value of $k$ as the \emph{co-dimension} of $A$ or $\PC$. The next bounds on $|\vertex(\PC_I)|$ assume that the co-dimension of $\PC$ is bounded.

Let $\Delta_1 = \Delta_1(A)$, then, due to Aliev, De~Loera et al. \cite{SupportIPSolutions}:
\begin{equation}\label{vertex_bound_supportIP}
    \abs{\vertex(\PC_I)} = (n \cdot k \cdot \Delta_1)^{O(k^2 \cdot \log(\sqrt{k} \cdot \Delta_1))}. 
\end{equation}

It is possible to improve the last bound. Let $s = \max\bigl\{\norm{v}_0 \colon v \in \vertex(\PC_I)\bigr\}$ be the \emph{sparsity parameter} of $\PC_I$. Due to Berndt, Jansen \& Klein \cite{NewBoundsForFixedM}, we have:
\begin{equation}\label{vertex_bound_NewBounds}
    \abs{\vertex(\PC_I)} = n^{k + s} \cdot s \cdot O(k)^{s-k} \cdot \log^{s}(k \cdot \Delta_1).
\end{equation}

The following improvement of \eqref{vertex_bound_NewBounds} was proposed in the work \cite{OnCanonicalProblems_Grib}, due to Gribanov, Shumilov et al.:
\begin{equation}\label{vertex_bound_OnCanonical}
    \abs{\vertex(\PC_I)} = n^s \cdot O(s)^{s+1} \cdot O(k)^{s-1} \cdot \log^{s-1}(k \cdot \Delta_1).
\end{equation}

Since $s = O\bigl( k \cdot \log(k \Delta_1) \bigr)$, due to Aliev, De~Loera et al. \cite{SupportIPSolutions}, we substitute $s$ to both bounds \eqref{vertex_bound_NewBounds} and \eqref{vertex_bound_OnCanonical}, and get:
\begin{equation*}
    \abs{\vertex(\PC_I)} = \bigl( n \cdot k \cdot \log(k \Delta_1) \bigr)^{O\bigl(k \cdot \log(k \Delta_1)\bigr)},
\end{equation*} which outperforms the bound \eqref{vertex_bound_supportIP}, due to \cite{SupportIPSolutions}. The last equality was proposed in Berndt, Jansen \& Klein \cite{NewBoundsForFixedM}. 

Due to Gribanov, Shumilov et al. \cite{OnCanonicalProblems_Grib}, it holds $s = O(k + \log(\Delta))$, where $\Delta = \Delta(A)$. Consequently, the bound \eqref{vertex_bound_OnCanonical} could be used to estimate $\abs{\vertex(\PC_I)}$ with respect to the $\Delta$ parameter instead of $\Delta_1$:
\begin{equation}\label{vertex_bound_Delta}
    \abs{\vertex(\PC_I)} = \bigl(n \cdot k \cdot \log(\Delta) \bigr)^{O\bigl(k + \log(\Delta)\bigr)}.
\end{equation}

Note that, due to \cite{OnCanonicalProblems_Grib}, the bounds \eqref{vertex_bound_OnCanonical} and \eqref{vertex_bound_Delta} can be used to work with the systems $A x \leq b$ having $m = n + k$ rows. Therefore, for the case when $\PC$ is defined by $A x \leq b$, it is also convenient to call $k$ as the co-dimension of $\PC$.

We summarise all the bounds in the following tables:
\begin{table}[ht]
    \centering
    \begin{tabular}{||c|c||}
    \hline
        $m^n \cdot O(n)^{n-1} \cdot \phi^{n-1}$ & due to Cook, Hartmann et al. \cite{IntVert_Cook}  \\
        \hline
         $m^{\frac{n}{2}} \cdot O(n)^{\frac{3}{2}n+1.5} \cdot \log^{n-1}(n \cdot \Delta_{ext}) =$ & \\
         $= m^{\frac{n}{2}} \cdot O(n)^{\frac{3}{2}n+1.5} \cdot (\phi + \log n)^{n-1} $ & due to Veselov \& Chirkov \cite{IntVertEstimates_VesChir} \\
         \hline
         $m^n \cdot O(n)^{n+1.5} \cdot \log^{n-1}(n \cdot \Delta)$ & due to Veselov \& Chirkov \cite{IntVerticesSurveyPart2} \\
         
         \hline
         $2 \cdot \binom{m}{n} \cdot \Delta^{n-1} =$ & \\
         $m^n \cdot \Omega(n)^{-n} \cdot \Delta^{n-1}$ & {\color{red} this work} \\
         \hline
          $O(n)^n \cdot \Delta^{3n-1}$ & {\color{red} this work} \\
          \hline
          $O(n)^{3 n} \cdot \Delta^{2n-1}$ & {\color{red} this work} \\
         \hline
    \end{tabular}
    
    \captionsetup{justification=centering}
    \caption{General bounds on $\abs{\vertex(\PC_I)}$}
\end{table}

\begin{table}[ht]
    \centering
    \begin{tabular}{||c|c||}
    \hline
    
        $(n \cdot k \cdot \Delta_1)^{O(k^2 \cdot \log(\sqrt{k} \cdot \Delta_1))}$ & due to Aliev, De~Loera et al. \cite{SupportIPSolutions} \\
        
        \hline
        
        $n^{k + s} \cdot s \cdot O(k)^{s-k} \cdot \log^{s}(k \cdot \Delta_1) = $ & \\
        $= \bigl( n \cdot k \cdot \log(k \Delta_1) \bigr)^{O\bigl(k \cdot \log(k \Delta_1)\bigr)}$ & due to Berndt, Jansen \& Klein \cite{NewBoundsForFixedM} \\
        
        \hline
        
        $n^s \cdot O(s)^{s+1} \cdot O(k)^{s-1} \cdot \log^{s-1}(k \cdot \Delta_1) = $ & due to Gribanov, Shumilov et al. \cite{OnCanonicalProblems_Grib} plus \\
        $= \bigl( n \cdot k \cdot \log(k \Delta_1) \bigr)^{O\bigl(k \cdot \log(k \Delta_1)\bigr)}$ & Aliev, De~Loera et al. \cite{SupportIPSolutions} \\
        
        \hline
        
        $\bigl(n \cdot k \cdot \log(\Delta) \bigr)^{O\bigl(k + \log(\Delta)\bigr)}$ & due to Gribanov, Shumilov et al. \cite{OnCanonicalProblems_Grib} \\
        
        \hline
    \end{tabular}
    
    \captionsetup{justification=centering}
    \caption{Bounds for $\abs{\vertex(\PC_I)}$ with dependence on $k$}
\end{table}

\subsection{Proof of the bound \eqref{main_bound}}

First of all, let us formulate some definitions.

\begin{definition}
Let $\PC = \PC(A,b)$ be a polyhedron as in definition of Theorem \ref{deep_base_th}. The set of indices $\BC \subseteq \intint m$ is a \emph{$\Delta$-deep base} if
\begin{enumerate}
    \item $|\BC| = n$ and $\det(A_{\BC}) \not= 0$;\\
    \item the following system is feasible:
    $$
    \begin{cases}
        b_{\BC} - (\Delta-1) \cdot \BUnit_n \leq A_{\BC}\, x \leq b_{\BC}\\
        A_{\NotBC}\, x \leq b_{\NotBC}\\
        x \in \RR^n,
    \end{cases}
    $$ where $\NotBC = \intint{m} \setminus \BC$.
\end{enumerate}


Let us denote the number of $\Delta$-deep bases of $\PC$ by  $\beta_{\Delta}(\PC)$.
\end{definition}

\begin{definition}
Let $\MC \subseteq \intint[0]{\Delta-1}^n$ be a convex-independent set, i.e. any point of $\MC$ can not be expressed as a convex combination of other points from $\MC$. 

Let us denote the maximal possible cardinally of $\MC$ by  $\gamma(n,\Delta)$.
\end{definition}

\begin{lemma}\label{vertex_bound_lm}
Let $\PC = \PC(A,b)$ be a polyhedron as in definition of Theorem \ref{deep_base_th}. Then, we have:
\begin{equation*}\label{inter_ineq}
    |\vertex(\PC_I)| \leq \beta_{\Delta}(\PC) \cdot \gamma(n,\Delta).
\end{equation*}

\end{lemma}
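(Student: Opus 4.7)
The plan is to associate to each integer vertex $v \in \vertex(\PC_I)$ a $\Delta$-deep base $\BC(v) \subseteq \intint{m}$ via Theorem~\ref{deep_base_th}, and then to bound, for each fixed $\Delta$-deep base $\BC$, the number of vertices that get associated with $\BC$ by $\gamma(n,\Delta)$. Since $\PC_I$ is a rational polyhedron whose vertices are integer points of $\PC$, every $v \in \vertex(\PC_I)$ lies in $\ZZ^n$ and forms a $0$-dimensional face of $\PC_I$. Invoking Theorem~\ref{deep_base_th} with $k = 0$ yields an index set $\JC$ with $|\JC| \geq n$, $\rank(A_\JC) = n$, and $\|A_\JC v - b_\JC\|_\infty < \Delta$. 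Extracting any $n$ linearly independent rows of $A_\JC$ produces a set $\BC(v) \subseteq \JC$ with $|\BC(v)| = n$ and $\det(A_{\BC(v)}) \neq 0$. Since $v \in \PC$ the slacks $b_{\BC(v)} - A_{\BC(v)} v$ are non-negative, and by the theorem they are strictly less than $\Delta$, so $v$ itself witnesses the feasibility condition in the definition, confirming that $\BC(v)$ is a $\Delta$-deep base.

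Next I would fix a $\Delta$-deep base $\BC$ and define
$$
V_\BC = \bigl\{\, v \in \vertex(\PC_I) \colon b_\BC - A_\BC v \in \intint[0]{\Delta - 1}^n \,\bigr\}.
$$
By the preceding step, each integer vertex of $\PC_I$ lies in $V_\BC$ for at least one $\Delta$-deep base $\BC$. The key inequality to prove is $|V_\BC| \leq \gamma(n,\Delta)$. Consider the affine bijection $\phi\colon x \mapsto b_\BC - A_\BC x$ on $\RR^n$, which is well defined because $\det(A_\BC) \neq 0$. By construction $\phi$ injects $V_\BC$ into $\intint[0]{\Delta - 1}^n$, and because $\phi$ is affine it preserves convex combinations. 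Hence if some $\phi(v) \in \phi(V_\BC)$ were a convex combination of other points of $\phi(V_\BC)$, then $v$ itself would be a convex combination of other vertices of $\PC_I$, contradicting the extremality of $v$ in $\PC_I$. Therefore $\phi(V_\BC)$ is a convex-independent subset of $\intint[0]{\Delta - 1}^n$ and $|V_\BC| = |\phi(V_\BC)| \leq \gamma(n,\Delta)$.

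Summing over the at most $\beta_\Delta(\PC)$ $\Delta$-deep bases then gives
$$
|\vertex(\PC_I)| \leq \sum_{\BC \text{ $\Delta$-deep}} |V_\BC| \leq \beta_\Delta(\PC) \cdot \gamma(n,\Delta),
$$
which is exactly the claim. The structure of the argument is a straightforward counting reduction, and the only non-trivial ingredient is Theorem~\ref{deep_base_th}, used at $k = 0$. I do not expect any serious obstacle; the most delicate point is simply the verification that a vertex of $\PC_I$ is an integer point and hence a legitimate input to Theorem~\ref{deep_base_th}, together with the observation that the affine bijection $\phi$ translates extremality in $\PC_I$ into convex-independence inside the box $\intint[0]{\Delta - 1}^n$.
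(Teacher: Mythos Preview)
Your proposal is correct and follows essentially the same approach as the paper: associate to each integer vertex a $\Delta$-deep base via Theorem~\ref{deep_base_th} at $k=0$, then bound each fibre by mapping it via $v\mapsto b_{\BC}-A_{\BC}v$ into the cube $\intint[0]{\Delta-1}^n$ and invoking the convex-independence bound $\gamma(n,\Delta)$. Your write-up is in fact a bit more explicit than the paper's (you spell out why $\BC(v)$ is a $\Delta$-deep base and why the affine bijection preserves convex-independence), but the argument is the same.
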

\begin{proof}
Let us consider the family $\BS$ of all possible $\Delta$-deep bases of $P$. For $\BC \in \mathscr{B}$, we use the following notation
$$
\VC_{\BC} = \{v \in \vertex(\PC_I) \colon b_{\BC} - A_{\BC} v < \Delta \cdot \BUnit\}.
$$

Due to Theorem \ref{deep_base_th}, we have
$
\vertex(\PC_I) = \bigcup_{\BC \in \BS} \VC_{\BC}
$. 

Now, we are going to estimate $|\VC_{\BC}|$. Let $\UC_{\BC} = \{ b_{\BC} - A_{\BC} v \colon v \in \VC_{\BC}\}$. Clearly, there exists a bijection between $\UC_{\BC}$ and $\VC_{\BC}$. 
Since $\VC_{\BC}$ is a convex-independent set, the same is true for $\UC_{\BC}$. Moreover $0 \leq u < \Delta \cdot \BUnit$, for $u \in U_{\BC}$. Consequently, $|\VC_{\BC}| = |\UC_{\BC}| \leq \gamma(n,\Delta)$, and $|\vertex(P_I)| \leq \beta_{\Delta}(\PC) \cdot \gamma(n,\Delta)$.
\end{proof}

\begin{corollary}\label{base_delta_cor}
In the assumptions of Theorem \ref{deep_base_th}, the following statements hold: 
\begin{enumerate}
    \item For any  $v \in \vertex(\PC_I)$, there exists a $\Delta$-deep base $\BC$ such that $\|b_{\BC} - A_{\BC} v\|_{\infty} \leq \Delta-1$;
    \item Additionally, $\abs{\supp_{\Delta-1}(b - A v)} \leq m-n$;
    \item The inequality $|\vertex(\PC_I)| \leq 2 \cdot \binom{m}{n} \cdot \Delta^{n-1}$ holds.
\end{enumerate}
\end{corollary}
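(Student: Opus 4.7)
The plan is to dispatch Parts 1 and 2 as immediate consequences of Theorem \ref{deep_base_th} applied to the $0$-dimensional face $\FC = \{v\}$, and then to combine Lemma \ref{vertex_bound_lm} with two elementary counting bounds on $\beta_{\Delta}(\PC)$ and $\gamma(n,\Delta)$ to derive Part 3.

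For Part 1, I would apply Theorem \ref{deep_base_th} with $k = 0$ to the singleton face $\{v\}$. This yields indices $\JC \subseteq \intint m$ with $|\JC| \geq n$, $\rank(A_{\JC}) = n$, and $\|b_{\JC} - A_{\JC} v\|_{\infty} < \Delta$, which the integrality of $A$, $b$, and $v$ upgrades to $\|b_{\JC} - A_{\JC} v\|_{\infty} \leq \Delta-1$. Selecting any $n$ linearly independent rows $\BC \subseteq \JC$ (possible because $\rank(A_{\JC}) = n$), the vertex $v$ itself witnesses feasibility of the defining system of a $\Delta$-deep base: the slacks on $\BC$ lie in $[0, \Delta-1]$, and the remaining inequalities are satisfied simply because $v \in \PC$. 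Part 2 is then the ``consequently'' clause of the same application of Theorem \ref{deep_base_th}, since at most $m - n$ indices can lie outside $\JC$.

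For Part 3, by Lemma \ref{vertex_bound_lm} it suffices to show $\beta_{\Delta}(\PC) \leq \binom{m}{n}$ and $\gamma(n,\Delta) \leq 2 \Delta^{n-1}$. The first bound is immediate, since a $\Delta$-deep base is in particular a choice of $n$ row indices from $m$. For the second, I would take any convex-independent set $\MC \subseteq \intint[0]{\Delta-1}^n$ and project it onto the first $n-1$ coordinates. Each nonempty fiber of the projection lies on an axis-parallel line, and a line contains at most two convex-independent points (the two extremes), so each fiber meets $\MC$ in at most two points. Since the image of the projection is contained in $\intint[0]{\Delta-1}^{n-1}$, a set of cardinality $\Delta^{n-1}$, we conclude $|\MC| \leq 2 \Delta^{n-1}$. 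Multiplying the two bounds yields $|\vertex(\PC_I)| \leq 2 \binom{m}{n} \Delta^{n-1}$.

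No step is genuinely difficult; the closest thing to an obstacle is recognising that the factor $2 \Delta^{n-1}$ in the final bound comes from the one-line projection argument for $\gamma(n,\Delta)$, which is where the gain over existing bounds is ultimately located.
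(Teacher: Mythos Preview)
Your approach matches the paper's: Parts 1 and 2 are dispatched exactly as the paper does (as immediate consequences of Theorem~\ref{deep_base_th} with $k=0$), and Part 3 via Lemma~\ref{vertex_bound_lm} together with the obvious bound $\beta_\Delta(\PC)\le\binom{m}{n}$ and the bound $\gamma(n,\Delta)\le 2\Delta^{n-1}$. The only difference is that the paper cites Brass~\cite{Brass1} for the latter inequality, whereas you supply the short projection argument directly; this is harmless (it is essentially the standard proof) and has the advantage of making your write-up self-contained.
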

\begin{proof}
Propositions 1 and 2 are straightforward consequences of Theorem \ref{deep_base_th}. Let us prove Proposition 3. Clearly, we have $\beta_{\Delta}(\PC) \leq \binom{m}{n}$. Due to Brass \cite{Brass1}, we have $\gamma(n,\Delta) \leq 2 \cdot \Delta^{n-1}$. Hence, the proof follows from Lemma \ref{vertex_bound_lm}.
\end{proof}

Using results of the papers \cite{columns_delta_modular} and  \cite{ModularDiffColumns} due to Averkov \& Schymura and Lee, Paat \& Stallknecht, we can give bounds that are independent on $m$:
\begin{corollary}
In the assumptions of Theorem \ref{deep_base_th}, the following statements hold:
\begin{enumerate}
    \item $\abs{\vertex(\PC_I)}  = O(n)^n \cdot \Delta^{3n-1}$;
    \item $\abs{\vertex(\PC_I)}  = O(n)^{3n} \cdot \Delta^{2n - 1}$.
\end{enumerate}
\end{corollary}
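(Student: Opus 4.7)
My plan is to derive the two bounds from Corollary \ref{base_delta_cor}(3), i.e.\ from
$$|\vertex(\PC_I)| \leq 2 \binom{m}{n} \Delta^{n-1},$$
by replacing $m$ by a quantity depending only on $n$ and $\Delta$. The replacement will come from the structural bounds on the number of pairwise non-parallel rows of a full column rank $\Delta$-modular integer matrix established by Averkov and Schymura \cite{columns_delta_modular} and by Lee, Paat and Stallknecht \cite{ModularDiffColumns}.

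First, I would reduce to an irredundant description by discarding every inequality implied by the remaining ones. The polyhedra $\PC$ and $\PC_I$ (and therefore $\vertex(\PC_I)$) are unchanged, and the reduced constraint matrix still has full column rank $n$; hence Theorem \ref{deep_base_th} and Corollary \ref{base_delta_cor}(3) still apply to the reduced system with a new row count $m'$. A key observation is that after this redundancy removal no two rows can be positively proportional, since in any such pair the less tight inequality would be implied by the tighter one; only opposite-sign parallel pairs may survive. Consequently, $m'$ is at most twice the number of pairwise non-parallel (unsigned) rows of the reduced matrix.

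Second, I would apply the two cited bounds to the transpose of the reduced matrix (which has $n$ rows, rank $n$, and the same parameter $\Delta$). Each result delivers a polynomial-in-$(n, \Delta)$ bound $c(n, \Delta)$ on the number of pairwise non-parallel columns of a rank-$n$ $\Delta$-modular integer matrix. Plugging $m' \leq 2 c(n, \Delta)$ into Corollary \ref{base_delta_cor}(3) via the estimate $\binom{m'}{n} \leq (m')^n$ and choosing $c(n, \Delta)$ from \cite{columns_delta_modular} yields a bound of the shape $O(n)^n \cdot \Delta^{3n-1}$, while the sharper $c(n, \Delta)$ from \cite{ModularDiffColumns} yields $O(n)^{3n} \cdot \Delta^{2n-1}$; comparing target exponents shows that the first paper gives a bound of type $O(n \Delta^2)$ on $c$ and the second of type $O(n^3 \Delta)$, which is exactly what is needed to calibrate the two statements.

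The main obstacle I expect is bureaucratic rather than conceptual: I must check that the redundancy-removal step is genuinely compatible with the $\Delta$-deep-base counting (this is immediate since $\vertex(\PC_I)$ is invariant under the reduction and Theorem \ref{deep_base_th} only needs full column rank), and that the precise hypothesis of the quoted bounds (\emph{pairwise non-parallel}, possibly including sign) matches the output of the redundancy reduction, which only kills positively-parallel duplicates. The discrepancy costs only a factor of $2$ from opposite-sign pairs, harmlessly absorbed into the $O(n)^n$ and $O(n)^{3n}$ constants. Once these two small checks are in place, the rest of the argument is straightforward arithmetic.
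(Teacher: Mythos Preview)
Your proposal is correct and follows essentially the same route as the paper: start from Corollary~\ref{base_delta_cor}(3), then bound $m$ (after removing redundant/parallel rows) via the Averkov--Schymura and Lee--Paat--Stallknecht results on the number of non-parallel rows of a $\Delta$-modular matrix, and substitute. The only cosmetic differences are that the paper compresses your redundancy-removal discussion into the phrase ``we can assume $m = O(n^4 \Delta)$ or $m = O(n^2 \Delta^2)$'' and uses the slightly sharper estimate $\binom{m}{n} = O(m/n)^n$ instead of $\binom{m}{n} \leq m^n$; also note that the paper attributes the two $m$-bounds to the references in the opposite pairing from yours, so double-check which paper gives which exponent.
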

\begin{proof}
Clearly, $\binom{m}{n} = O\bigl(\frac{m}{n}\bigr)^n$. Due to \cite{columns_delta_modular} and \cite{ModularDiffColumns}, we can assume that $m = O(n^4 \cdot \Delta)$ or $m = O(n^2 \cdot \Delta^2)$ respectively.
\end{proof}

\subsection{Conclusions and directions for future research}

Due to Lemma \ref{vertex_bound_lm}, we estimate the integer vertices number of $\PC_I$ by \\$\beta_{\Delta}(\PC) \cdot \gamma(n,\Delta)$. Due to Erd{\H{o}}s, F{\"u}redi et al. \cite{grid_Erdos}, we have $\gamma(n,\Delta) \geq \frac{4}{n} \cdot \Delta^{n-2}$, so our bound \eqref{main_bound} on $|\vertex(\PC_I)|$ cannot be significantly improved, using only improvements on $\gamma(n,\Delta)$. On the other hand, we do not know any non-trivial upper or lower bounds for the number $\beta_{\Delta}(\PC)$ of $\Delta$-deep bases with respect to $\PC$. We believe that some significant improvements can be obtained, using accurate analysis of $\beta_{\Delta}(\PC)$.

\bibliographystyle{spmpsci}
\bibliography{grib_biblio}

\section*{Statements and Declarations}

\subsection*{\bf Funding}
This work was prepared under financial support of Russian Science Foundation grant No 21-11-00194.

\subsection*{\bf Competing Interests}
The authors have no relevant financial or non-financial interests to disclose.

\subsection*{\bf Author Contributions}
All the authors contributed equally.

\subsection*{\bf Data Availability}
The manuscript has no associated data.

\end{document}